\newtheorem{theorem}{Theorem}%[section]
\newtheorem{proposition}{Proposition}%[theorem]{Proposition}
\newenvironment{example}[1][Example:]{\begin{trivlist}
\item[\hskip \labelsep {\bfseries #1}]}{\end{trivlist}}
\begin{document}
\title{Extended disformal approach in the scenario of Rainbow Gravity}

\author{Gabriel G.\ Carvalho}\email{Gabriel.Carvalho@icranet.org}
\author{Iarley P.\ Lobo}\email{Iarley.PereiraLobo@icranet.org}
\author{Eduardo Bittencourt}\email{Eduardo.Bittencourt@icranet.org}

\affiliation{CAPES Foundation, Ministry of Education of Brazil, Bras\'ilia, Brazil and\\
Sapienza Universit\`a di Roma - Dipartimento di Fisica\\
P.le Aldo Moro 5 - 00185 Rome - Italy}
\pacs{02.40.-k, 02.40.Ky, 04.60.Bc}
\date{\today}

\begin{abstract}
We investigate all feasible mathematical representations of disformal transformations on a space-time metric according to the action of a linear operator upon the manifold's tangent and cotangent bundles. The geometric, algebraic and group structures of this operator and their interfaces are analyzed in detail. Then, we scrutinize a possible physical application, providing a new covariant formalism for a phenomenological approach to quantum gravity known as Rainbow Gravity.
\end{abstract}

\maketitle

\section{Introduction}
The study of new kinds of symmetries associated with equations of motion is crucial in modern physics, since it can elucidate hidden features there and help us find new nontrivial solutions to the involved equations \cite{dirac}. We quote Noether's theorem, gauge choices and the theory of the angular momentum operators, as some important examples from a lengthy list describing physical properties of a given system that can circumvent the cumbersomeness of solving equations of motion.

The same reasoning can be applied to disformal transformations. The increasing literature on this issue has revealed new physics beyond Bekenstein's initial proposal \cite{beken1, beken2}. Motivated by the results we have previously obtained in Ref.\ \cite{bitt15}, we show here that disformal transformations can be defined not only as purely geometric transformations, but they allow for distinct representations (algebraic, geometric and group) in general. In this way, we gather each mathematical aspect of disformal transformations of metric tensors in a unique object. By constructing an abstract operator acting upon the tangent spaces of a manifold, it is possible to see that it takes the form of a space-time geometry, a genuine algebraic tensor or a group element. This unified description of the disformal maps shall be illustrated by means of the diagrams in Sec.\ \ref{secmach}. The aforesaid operator singles out a preferred vector field (or a set of) to deform a previously defined tetrad frame. Therefore, depending on the choice of a such vector, the physical notions of time, energy and momentum can be altered. As we shall see later, new trends in quantum-gravity phenomenology, for instance Ref.\ \cite{magueijo04}, point to this direction.

Naturally, one can also interpret the new formalism developed here for disformal transformations as purely mathematical; nevertheless this is already sufficient to attract attention on its own. Notwithstanding, from the physical point of view, there has been a growing interest in disformal transformations due to their applications in several gravitational theories, for instance, Bekenstein's TeVeS formalism \cite{beken_mond}, which furnishes a covariant formulation for MOND \cite{milgrom} in the weak-field limit; bimetric theories of gravity \cite{clifton}, either scalar or scalar-tensor theories \cite{scalartheory,mota,dario,rua,sakstein1,matarrese,ip,sakstein2}, including Mimetic and Horndeski ones, disformal inflation \cite{nemanja} and analogue models to gravity \cite{nov_bit_gordon,nov_bit_drag}. There are also situations where disformal transformations are related to nonmetricity \cite{yuan} and particle physics, providing alternative explanations for the chiral symmetry breaking \cite{bitt_nov_faci}, the anomalous magnetic moment \cite{nov_bit}, as well as the disformal invariance of matter fields dynamics \cite{erico12,erico13,bitt15}.

This paper is organized as follows. In Sec.\ \ref{nowandthen} we review the standard geometric definitions for disformal transformations and, in Sec.\ \ref{newfacet}, we introduce the aforesaid disformal operator acting on vector fields on $M$. We then follow the same approach to recover the inverse of a disformal metric by introducing a operator acting on co-vectors. In other words, we establish a new procedure such that instead of mapping a previously defined {\it metric} onto a disformal metric, we map a previously defined {\it vector field} onto a disformal vector field. Indeed, a metric tensor is an additional structure we can endow a manifold with and the existence of vector fields only relies on the differential structure of the manifold, hence it is more fundamental. In section \ref{secmach} we show how these operators become algebraic tensors in terms of an arbitrary coordinate system and their relationship to the components of the disformal (co-)metric. With the help of the disformal group structure, already satisfied by disformal metrics, we then show that all the aforesaid operators can be reduced to a single one. In Sec.\ \ref{seccausal} we derive an algebraic criterium, in terms of the disformal parameters, to promptly find the causal relation between the background light cone and the disformal one. Also, we discuss how the introduction of a disformal operator generating a disformal metric gives rise to two consistent interpretations of causality. Finally, in Sec.\ \ref{applications} we scrutinize the case of Rainbow Gravity \cite{magueijo04}---a formalism to quantum-gravity phenomenology that takes into account a possible energy-dependence of the space-time metric probed by a particle with such energy---and show how such phenomenological approach has a natural explanation within the paradigm of disformal transformations.

\section{Disformal transformations: now and then}\label{nowandthen}
Using the definition presented in Ref.\ \cite{bitt15}, a single-vector disformal transformation of a given metric, which we shall simply call a disformal transformation, is an application that takes scalar functions $\alpha$ and $\beta$, a metric tensor $g$ and a globally defined time-like\footnote{It could be extended for light-like vectors or even tensorial fields as discussed in Refs.\ \cite{erico13,bitt15}, but these cases are out of the scope of this paper, for practical reasons.} vector field $V$ in a space-time and associates them to another (well-defined) metric tensor $\widehat{g}$ according to
\begin{equation}
\label{disformalmet1}
\widehat{g}(\ast,\cdot)=\alpha\, g(\ast,\cdot)+\frac{\beta}{g(V,V)}\,g(V,\ast)\otimes g(V,\cdot).
\end{equation}
The existence of a non-vanishing time-like vector field on $M$ is guaranteed when one considers time-orientable space-times, which is reasonable from the physical point of view. If $M$ is not time-orientable, there still exists a time-orientable twofold covering of $M$, where this formalism can lay upon (cf.\ brief discussion in Ref.\ \cite{penrose}). Taking that into account, besides mathematical convenience and physical reasonableness, we shall henceforth consider only time-orientable space-times.

The map which defines the disformal metric is well-defined if $\alpha>0$ and $\alpha+\beta>0$ throughout the manifold. This well-definiteness means that the constructed $\widehat{g}$ is a pseudo-Riemannian metric with the same signature as that of $g$. The scalars $\alpha$ and $\beta$ are not necessarily functions depending only on points of the manifold, but rather arbitrary functions that could also have a functional dependence on $V$ and its derivatives. We can write down explicitly the components of the disformal metric in a given coordinate system in its covariant and contra-variant versions as
\begin{eqnarray}
\widehat{g}_{\mu\nu}&=&\alpha g_{\mu\nu}+\frac{\beta}{V^2}V_{\mu}V_{\nu},\label{cov}\\[2ex]
\widehat{g}^{\mu\nu}&=&\frac{1}{\alpha}g^{\mu\nu}-\frac{\beta}{\alpha(\alpha+\beta)}\frac{V^{\mu}V^{\nu}}{V^2},\label{contrav}
\end{eqnarray}
where $V^2\equiv g_{\mu\nu}V^{\mu}V^{\nu}$ and it is straightforward to verify that $\widehat{g}_{\mu\nu}\widehat{g}^{\nu\sigma}=\delta_{\mu}^{\sigma}$.
\par
The main goal of the ensuing mathematical machinery is to introduce a new formalism to describe disformal metrics. This shall be done in terms of disformal operators acting on vectors and co-vectors in such a way that the disformal metric inherits their properties.

\subsection{New facet of a disformal transformation}\label{newfacet}
\par
Hereafter, let us fix a space-time $(M,g)$ and a non-vanishing time-like vector field $V \in \Gamma(TM)$\footnote{$\Gamma(TM)$ denotes the set of smooth sections of the tangent bundle, i.e. vector fields over $M$. This set is also denoted in the literature by $\mathfrak{X}(M)$ or $C^{\infty}(TM)$.}. Consider also any two scalars $\alpha$ and $\beta$ satisfying the conditions $\alpha>0$ and $\alpha+\beta>0$. Then, the map \ref{disformalmet1} can be equivalently described as an action on the tangent space in each point of the manifold by the following definition:
\begin{equation}
\widehat{g}\left(Y,Z\right)=g\left(\overrightarrow{D}(Y),\overrightarrow{D}(Z)\right),\label{deformednorm}
\end{equation}
where, for any $X\in\Gamma(TM)$, we have
\begin{equation}\label{def-oper}
\overrightarrow{D}(X)\doteq\sqrt{\alpha+\beta}X_{\parallel}+\sqrt{\alpha}X_{\perp},
\end{equation}
with
\begin{equation}
X_{\parallel}\doteq\frac{g(X,V)}{g(V,V)}V, \quad \mbox{and} \quad X_{\perp}\doteq X-\frac{g(X,V)}{g(V,V)}V,
\end{equation}
where $X_{\parallel}$ is the projection of $X$ onto $V$ and $X_{\perp}$ is the projection onto the orthogonal complement of $V$, such that $X=X_{\parallel}+X_{\perp}$. So, instead of working with a disformal transformation on the metric, one can define the map
\begin{eqnarray}
\overrightarrow{D} \ : \ \Gamma(TM)&\rightarrow & \Gamma(TM) \\
X&\mapsto & \widehat{X}=\overrightarrow{D}(X)\nonumber
\end{eqnarray}
to deform vectors and recover the disformal metric $\widehat{g}$. We shall denote by $\widehat{X}$ the disformal vector related to $X$ through the action of $\overrightarrow{D}$. Clearly, such map is linear, i.e. $\overrightarrow{D}(\gamma X+Y)=\gamma\overrightarrow{D}(X)+\overrightarrow{D}(Y)$ for any scalar function $\gamma$ and vectors $X$ and $Y$, and from this, $\overrightarrow{D}$ defines a mixed rank-2 tensor field on $M$ given by
\begin{eqnarray}
\overrightarrow{D}:\Gamma(T^{*}M)\otimes\Gamma(TM)&\rightarrow &\mathcal{F} (M)\\
(\theta,X)&\mapsto & \overrightarrow{D}(\theta,X)=\theta\left(\overrightarrow{D}(X)\right),\nonumber
\end{eqnarray}
where $\mathcal{F} (M)$ corresponds to the set of all smooth real-valued functions on $M$. Furthermore, $\overrightarrow{D}$ can be seen as a linear transformation on the tangent space $T_pM$ for each $p\in M$ and, provided $\alpha >0$ and $\alpha+\beta >0$, a linear isomorphism.
\par
It is simple to see that the operator $\overrightarrow{D}$ satisfying \ref{deformednorm} is not unique. In fact, all possible operators of the form $\overrightarrow{D}(X) = f_1 X_{\parallel} + f_2X_{\perp}$ satisfying \ref{deformednorm} are fourfold degenerated:
\begin{equation}
\label{poss_oper}
\overrightarrow{D}_{\{\pm,\pm\}}(X)=\pm\sqrt{\alpha+\beta}X_{\parallel}\pm\sqrt{\alpha}X_{\perp}.
\end{equation}
This degeneracy into the choice of $\overrightarrow{D}$ could lead to ambiguities in the definition of a disformal metric by a disformal operator according to Eq.\ \ref{deformednorm}. For reasons that shall become clear subsequently, we shall define the disformal operator as $\overrightarrow{D}(X)=\overrightarrow{D}_{\{+,+\}}(X)$, for every vector field $X$, and prove its uniqueness afterwards.

\subsection{The disformal co-metric}\label{seccometric}
In Sec.\ \ref{applications}, concerning the physical applications of our analyses, it will be important to use the disformal co-metric instead of the disformal metric; therefore, we briefly elaborate upon how one can analogously define a disformal operator acting on co-vectors to recover the information contained in Eq.\ \ref{contrav}.

For each vector $X$ in a manifold $M$ endowed with a metric tensor $g$, there exists its unique metric dual $\widetilde{X} \doteq g(X,\ast)$. Hence the dual is a linear map $\widetilde{X}:\Gamma(TM)\rightarrow \mathcal{F}(M)$, i.e. $\widetilde{X} \in \Gamma(T^{*}M)$. In this way, the co-metric $h$ is defined by a linear, symmetric and non-degenerate map:
\begin{eqnarray}
h:\Gamma(T^{\ast}M)\otimes \Gamma(T^{\ast}M)&\rightarrow& \mathcal{F}(M)\\
(\widetilde{X},\widetilde{Y})&\mapsto& h(\widetilde{X},\widetilde{Y})=g(X,Y).\nonumber
\end{eqnarray}
\par
In a given coordinate system $\{x^{\mu}\}$, the co-metric has components $g^{\mu\nu}=h(dx^{\mu},dx^{\nu})$, i.e., it is the contra-variant components of the metric. From Eq.\ \ref{contrav}, we can define the disformal co-metric intrinsically as
\begin{equation}
\widehat{h}(\ast,\cdot)=\frac{1}{\alpha}h(\ast,\cdot)-\frac{\beta}{\alpha(\alpha+\beta)} \frac{h(\widetilde{V},\ast)\otimes h(\widetilde{V},\cdot)}{h(\widetilde{V},\widetilde{V})}.\label{deformednorm2}
\end{equation}

Analogously to what we did before, we can define the disformal co-vector $\widehat{\omega}$ associated with $\omega$ by the application of a linear map $\widetilde{D}\ : \Gamma(T^{\ast}M)\rightarrow \Gamma(T^{\ast}M)$, from which we write the disformal co-metric as
\begin{gather}\label{deformedcometric}
\widehat{h}(\omega,\eta)=h\left(\widetilde{D}(\omega),\widetilde{D}(\eta)\right),
\end{gather}
with $\widetilde{D}$ given by
\begin{eqnarray}
\widetilde{D}(\omega)\doteq\frac{1}{\sqrt{\alpha+\beta}}\omega_{\parallel}+\frac{1}{\sqrt{\alpha}}\omega_{\perp}
\end{eqnarray}
and again we have the decomposition
\begin{equation}
\omega_{\parallel}\doteq\frac{h(\omega,\widetilde{V})}{V^2}\widetilde{V} \quad \mbox{and} \quad \omega_{\perp}\doteq\omega-\frac{h(\omega,\widetilde{V})}{V^2}\widetilde{V}.
\end{equation}
Similarly to the covariant disformal operator, we have four possible contra-variant disformal operators. We shall set $\widetilde{D}(\omega)=\widetilde{D}_{\{+,+\}}(\omega)$ to be the disformal operator for elements in $\Gamma(T^{*}M)$ and again prove that it is unique.

\section{Machinery and uniqueness of the disformal operators}\label{secmach}
In Ref.\ \cite{bitt15} one can find algebraic properties concerning disformal metrics as the eigenvalue problem for $\widehat{g}^{\mu}_{\ \nu}= \widehat{g}^{\mu\sigma}g_{\sigma\nu}$ and a group structure satisfied by them. Indeed, we now show that these metrics can be completely characterized by (disformal) operators, since they share similar properties. For practical purposes, it is useful to provide a coordinate representation for both $\overrightarrow{D} $ and $\widetilde{D}$. We then start with these coordinate expressions and use them to prove some propositions about the disformal operator, exploring their algebraic and geometric features.

\subsection{Coordinate expressions}\label{coordiexp}
To derive a coordinate expression for the disformal operators $\overrightarrow{D} $ and $\widetilde{D}$, let $\{ x^{\mu}\}$ be a coordinate system, $\{\partial_{\mu}\}$ the tangent vectors associated with the coordinate lines and $\{dx^{\mu}\}$ their duals. Thus,
\begin{eqnarray*}
\overrightarrow{D} (\partial_{\nu}) &=&\sqrt{\alpha}\,\partial_{\nu} + \frac{\sqrt{\alpha+\beta}-\sqrt{\alpha}}{V^2}V_{\nu}V.
\end{eqnarray*}
Applying $dx^{\mu}$ to this, we get the desired expression
\begin{equation}
\label{coord1}
\mathfrak{D}^{\mu}_{\ \nu}\doteq dx^{\mu}\left(\overrightarrow{D} (\partial_{\nu})\right) = \sqrt{\alpha}\, \delta^{\mu}_{\nu} + \frac{\sqrt{\alpha+\beta}-\sqrt{\alpha}}{V^2}V^{\mu}V_{\nu}.
\end{equation}
In coordinates, we thus have $\widehat{X}^{\mu} = \mathfrak{D}^{\mu}_{\ \nu}X^{\nu}$. Analogously, we get
\begin{equation}
\label{coord2}
\mathcal{D}^{\ \mu}_{\nu}\doteq \partial_{\nu}\left(\widetilde{D} (dx^{\mu})\right) = \frac{1}{\sqrt{\alpha}}\delta^{\mu}_{\nu} + \left( \frac{1}{\sqrt{\alpha + \beta}}-\frac{1}{\sqrt{\alpha }}\right)\frac{V^{\mu}V_{\nu}}{V^2},
\end{equation}
when defining $\widehat{\omega}_{\mu} = \mathcal{D}_{\mu}^{\ \nu}\omega_{\nu}$. It should be remarked that with our definitions vectors (co-vectors) transform upon the action of $\overrightarrow{D}$ ($\widetilde{D}$). Although it is possible to transform co-vectors (vectors) by means of $\mathfrak{D}^{\mu}_{\ \nu}$ ($\mathcal{D}^{\ \mu}_{\nu}$) this shall not be of our general interest here.

With the coordinate expressions for $\overrightarrow{D}$ and $\widetilde{D}$ at our disposal, we state:
\begin{proposition}
$\mathfrak{D}^{\mu}_{\ \nu}$ and $\mathcal{D}_{\nu}{}^{\mu}$ satisfy
\begin{gather}
\mathfrak{D}^{\mu}_{\ \sigma} \mathcal{D}^{\ \sigma}_{\nu}	=\delta^{\mu}_{\nu},
\end{gather}
hence acting as mutual inverses.
\end{proposition}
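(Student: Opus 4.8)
The plan is to prove the identity by direct substitution of the coordinate expressions \ref{coord1} and \ref{coord2}, organizing the computation around the single structural fact that makes everything collapse. First I would introduce the shorthand $a \equiv \sqrt{\alpha}$ and $b \equiv \sqrt{\alpha+\beta}$ (both real and nonzero by the hypotheses $\alpha>0$ and $\alpha+\beta>0$) together with the tensor $P^{\mu}_{\ \nu} \equiv V^{\mu}V_{\nu}/V^2$. The crucial observation is that $P$ is idempotent: since $V_{\sigma}V^{\sigma}=V^2$, one has $P^{\mu}_{\ \sigma}P^{\sigma}_{\ \nu}=V^{\mu}(V_{\sigma}V^{\sigma})V_{\nu}/(V^2)^2=P^{\mu}_{\ \nu}$. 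With this notation the two operators read $\mathfrak{D}^{\mu}_{\ \nu}=a\,\delta^{\mu}_{\nu}+(b-a)P^{\mu}_{\ \nu}$ and $\mathcal{D}_{\nu}^{\ \mu}=\tfrac{1}{a}\delta^{\mu}_{\nu}+\left(\tfrac{1}{b}-\tfrac{1}{a}\right)P^{\mu}_{\ \nu}$; that is, both live in the commutative algebra spanned by $\delta^{\mu}_{\nu}$ and $P^{\mu}_{\ \nu}$.

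Next I would contract the two expressions, $\mathfrak{D}^{\mu}_{\ \sigma}\mathcal{D}^{\ \sigma}_{\nu}$, expanding the product into four terms. The term $a\cdot\tfrac{1}{a}\,\delta^{\mu}_{\nu}$ yields the desired $\delta^{\mu}_{\nu}$; two cross terms of the form $\delta\cdot P$ and $P\cdot\delta$ are proportional to $P^{\mu}_{\ \nu}$ directly; and the last term $(b-a)\left(\tfrac{1}{b}-\tfrac{1}{a}\right)P^{\mu}_{\ \sigma}P^{\sigma}_{\ \nu}$ reduces to a multiple of $P^{\mu}_{\ \nu}$ thanks to idempotency. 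Collecting everything, the result takes the form $\delta^{\mu}_{\nu}+C\,P^{\mu}_{\ \nu}$ with scalar coefficient $C=a\left(\tfrac{1}{b}-\tfrac{1}{a}\right)+\tfrac{b-a}{a}+(b-a)\left(\tfrac{1}{b}-\tfrac{1}{a}\right)$. The final step is to check that $C=0$: grouping the pieces gives $C=\left(\tfrac{a}{b}-1\right)+\left(\tfrac{b}{a}-1\right)+\left(2-\tfrac{a}{b}-\tfrac{b}{a}\right)=0$, so that $\mathfrak{D}^{\mu}_{\ \sigma}\mathcal{D}^{\ \sigma}_{\nu}=\delta^{\mu}_{\nu}$ as claimed.

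There is no genuine obstacle here, as the computation is routine; the only point demanding care is bookkeeping the square-root factors so that $C$ cancels exactly, this cancellation being precisely what encodes the inverse relation between the parallel/perpendicular eigenvalues $(b,a)$ and $(1/b,1/a)$. I would also stress the conceptual content behind the calculation: any element $x\,\delta+y\,P$ of the algebra generated by an idempotent $P$ is invertible whenever $x\neq 0$ and $x+y\neq 0$, with inverse $\tfrac{1}{x}\delta+\left(\tfrac{1}{x+y}-\tfrac{1}{x}\right)P$. Setting $x=a$ and $x+y=b$ shows that $\mathcal{D}$ is exactly this inverse of $\mathfrak{D}$, which makes transparent why the well-definiteness conditions $\alpha>0$ and $\alpha+\beta>0$ are precisely what guarantees invertibility, and it renders the conclusion manifestly independent of the chosen coordinate system.
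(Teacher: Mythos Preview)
Your proof is correct and follows the same approach as the paper, namely direct verification from the coordinate expressions \ref{coord1} and \ref{coord2}; the paper simply declares the computation ``straightforward'' without spelling it out. Your organization around the idempotent $P^{\mu}_{\ \nu}=V^{\mu}V_{\nu}/V^2$ and the two-dimensional commutative algebra it generates is a clean way to make the cancellation transparent, and your closing remark on the general inverse in that algebra is a nice bonus that the paper does not mention.
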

\begin{proof}
This is straightforward from Eqs.\ \ref{coord1} and \ref{coord2}.
\end{proof}
\par
Using Eqs.\ \ref{deformednorm}, \ref{deformedcometric} and the coordinate expressions for $\mathfrak{D}^{\mu}_{\ \nu}$ and $\mathcal{D}_{\nu}^{\ \mu} $, it is easy to show that
\begin{proposition}
The diagrams
\begin{equation*}
\begin{tikzcd}
\Gamma(TM) \arrow{r}{\overrightarrow{D}} \arrow[swap]{dr}{\widehat{g}} &
\Gamma(TM) \arrow{d}{g}\\
&\Gamma(T^{*}M)
\end{tikzcd}\hspace{10pt}
\begin{tikzcd}
\Gamma(T^{*}M)\arrow{r}{\widetilde{D}} \arrow[swap]{d}{\widehat{g}^{-1}} &
\Gamma(T^{*} M)\arrow{dl}{g^{-1}}\\
\Gamma(TM)
\end{tikzcd}
\end{equation*}
are not commutative.	
\end{proposition}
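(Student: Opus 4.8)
The plan is to read both the diagonal and the vertical arrows in each diagram as the musical isomorphisms (index lowering/raising) induced by the relevant metrics, and then to decide commutativity by comparing the two composite maps in an arbitrary chart, using the coordinate expressions in Eqs.\ \ref{coord1} and \ref{coord2}. For the left diagram, commutativity would assert $g\circ\overrightarrow{D}=\widehat{g}$ as maps $\Gamma(TM)\to\Gamma(T^{*}M)$, i.e.\ in components $g_{\mu\sigma}\mathfrak{D}^{\sigma}_{\ \nu}=\widehat{g}_{\mu\nu}$.

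First I would contract \ref{coord1} with $g_{\mu\sigma}$, obtaining $\sqrt{\alpha}\,g_{\mu\nu}+(\sqrt{\alpha+\beta}-\sqrt{\alpha})\,V_{\mu}V_{\nu}/V^{2}$, and set this against $\widehat{g}_{\mu\nu}$ from \ref{cov}. Since $g_{\mu\nu}$ is non-degenerate whereas $V_{\mu}V_{\nu}$ has rank one, the two symmetric tensors are linearly independent, so matching coefficients forces $\sqrt{\alpha}=\alpha$ together with $\sqrt{\alpha+\beta}-\sqrt{\alpha}=\beta$. Under the standing hypotheses $\alpha>0$, $\alpha+\beta>0$, the first relation leaves only $\alpha=1$ and the second then only $\beta=0$; that is, the diagram commutes exactly when $\overrightarrow{D}=\mathrm{Id}$, and so fails to commute for any nontrivial disformal transformation.

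For the right diagram I would run the identical scheme with $g^{-1}$, $\widehat{g}^{-1}$ and $\widetilde{D}$: commutativity reads $g^{\rho\mu}\mathcal{D}_{\mu}^{\ \nu}=\widehat{g}^{\rho\nu}$, whose left-hand side follows from \ref{coord2} as $\alpha^{-1/2}g^{\rho\nu}+\bigl((\alpha+\beta)^{-1/2}-\alpha^{-1/2}\bigr)V^{\rho}V^{\nu}/V^{2}$, to be compared with \ref{contrav}. The same coefficient matching again isolates $\alpha=1$, $\beta=0$, establishing non-commutativity in the nontrivial regime.

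These manipulations are purely routine, so the real content is conceptual. The clarifying remark---and the reason the composites cannot reproduce the disformal (co-)metric---is that by \ref{deformednorm} one has $\widehat{g}(Y,Z)=g(\overrightarrow{D}(Y),\overrightarrow{D}(Z))$, carrying a copy of $\overrightarrow{D}$ in \emph{both} slots, whereas $g\circ\overrightarrow{D}$ deforms only the first argument before lowering the index with the background metric; the two agree iff $\overrightarrow{D}$ is the identity. I expect the only delicate point to be the correct identification of the arrows $g,\widehat{g}$ (and their inverses in the second diagram) as the metric dualities; once this is pinned down, the coefficient comparison closes the proof.
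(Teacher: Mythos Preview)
Your proposal is correct and follows essentially the same approach the paper indicates: the paper merely asserts that the non-commutativity ``is easy to show'' from Eqs.~\ref{deformednorm}, \ref{deformedcometric} and the coordinate expressions \ref{coord1}--\ref{coord2}, and you have carried out precisely that coordinate comparison in detail, together with the conceptual observation (implicit in the paper's reference to \ref{deformednorm}) that $\widehat{g}$ carries a copy of $\overrightarrow{D}$ in each slot whereas $g\circ\overrightarrow{D}$ carries only one.
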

From the theory of differentiable manifolds it is known that for each point $p \in M$, the tangent $T_pM$ and cotangent $T_p^{*}M$ spaces of a differentiable manifold $M$ are naturally isomorphic linear spaces, although this isomorphism is basis-dependent. In the presence of a metric tensor on $M$, there is a canonical isomorphism between $T_pM$ and $T_p^{*}M$, namely, for a given vector $X\in T_pM$, there is a unique $\omega \in T_p^{*}M$ satisfying $g(X,\cdot) = \omega$. Since we are now dealing with a manifold endowed with two metric tensors, the above proposition becomes ambiguous when taking duals since the $g-$dual of a disformal vector is not the $\widehat{g}-$dual of the vector. This ambiguity is always present when the manifold under consideration has more than one metric tensor. Therefore, it is important to make clear which metric tensor is being used when raising and lowering indexes.

However, there is a commutative manner to deform vectors and co-vectors and take their duals. It is not difficult to show that the diagram
\begin{equation}\label{diagram}
\begin{tikzcd}
\Gamma(TM) \arrow{r}{\overrightarrow{D}} \arrow[swap]{d}{\widehat{g}} &
\Gamma(TM) \arrow{d}{g}\\
\Gamma(T^{*}M) \arrow[swap]{r}{\widetilde{D}} &\Gamma(T^{*}M)
\end{tikzcd}
\end{equation}
is {\it commutative}. In fact, we could have started with Eq.\  \ref{deformednorm} and the definition of $\overrightarrow{D}$ given by \ref{def-oper} and then define $\widetilde{D}$ to be the only operator able to make \ref{diagram} a commutative diagram. In doing so, one can recover the coordinate expression for $\widetilde{D}$ and the other properties associated with it. For completeness, we stress that if the direction of the arrows labeled by $g$ and $\widehat{g}$ is reversed, and replacing $g$ and $\widehat{g}$ by their inverses, the diagram is also commutative. Thus, one can start with the co-metric and the definition of $\widetilde{D}$ to define $\overrightarrow{D}$ and everything else in terms of it.

\subsection{Disformal group structure revisited}\label{group}
In this section we review one of the mathematical structures underlying disformal transformations (for more details the reader is addressed to Refs.\ \cite{erico13,bitt15}). Let $\overrightarrow{D}_i$ be disformal operators with disformal parameters $\alpha_i$ and $\beta_i$, for $i=1,2$, given by
\begin{gather}
\overrightarrow{D}_1(\cdot)=\sqrt{\alpha_1+\beta_1}(\cdot)_{\parallel}+\sqrt{\alpha_1}(\cdot)_{\perp},\\
\overrightarrow{D}_2(\cdot)=\sqrt{\alpha_2+\beta_2}(\cdot)_{\parallel}+\sqrt{\alpha_2}(\cdot)_{\perp}.	
\end{gather}
Since the action of $\overrightarrow{D}_i$ on a vector field is a also vector field, it is easy to verify that for any vector field $X$ holds
\begin{eqnarray}
\label{disformalaction}
\overrightarrow{D}_1\left(\overrightarrow{D}_2(X)\right)=\overrightarrow{D}_2\left(\overrightarrow{D}_1(X)\right)&=&\nonumber\\
 \sqrt{(\alpha_1+\beta_1)(\alpha_2+\beta_2)}X_{\parallel}+\sqrt{\alpha_1\alpha_2}X_{\perp}.
\end{eqnarray}
We can use this equation to define the composition of two disformal operators as
\begin{eqnarray}
\label{disformalgroup}
\left(\overrightarrow{D}_1\bullet \overrightarrow{D}_2\right)(X) =\left(\overrightarrow{D}_2\bullet\overrightarrow{D}_1\right)(X) &=&\nonumber\\
 \sqrt{(\alpha_1+\beta_1)(\alpha_2+\beta_2)}X_{\parallel}+\sqrt{\alpha_1\alpha_2}X_{\perp}.
\end{eqnarray}

One can also verify that the set of all disformal operators with the composition law given above is closed. It means that the composition of two disformal operators is itself another operator. The commutativity and associativity are easily checked, characterizing an Abelian group structure for that set, where the identity operator has parameters $\alpha=1$ and $\beta=0$ and the inverse of an operator with parameters $\alpha$ and $\beta$ has parameters $\alpha^{\prime} = \alpha^{-1}$ and $\beta^{\prime} = -\beta [\alpha (\alpha+\beta)]^{-1}$. A similar result is surely obtained for $\widetilde{D}$, since they share the same structure.

As mentioned before, there is a group structure associated with disformal metrics. Comparing the composition law \ref{disformalgroup} with the approach developed in Ref.\ \cite{bitt15}, it is neater and more elegant if we deal with operators instead of the group action. Finally, it should be noticed that particularly interesting examples of disformal sub-groups take place when all conformal coefficients are equal to $1$---which renders disformal metrics similar to those from the spin-2 field theory formulation, but with finite inverse metric---besides the cases in which the disformal coefficients are zero ($\beta's=0$), coinciding with the usual conformal group.

\subsection{Uniqueness of the disformal operator}\label{uniqueD}
It has been shown that disformal metrics are related to an Abelian group structure \cite{bitt15}. More precisely, there is an Abelian group acting on the space of metrics on $M$. Besides, in the previous section, we have seen that $\overrightarrow{D}$ and $\widetilde{D}$ also satisfy an Abelian group structure, and in Eq.\ \ref{deformednorm} we proposed that a disformal metric arises when we deform vectors and use the background metric. So, if we want to characterize the disformal metric in terms of a disformal operator, $\overrightarrow{D}$ must be well-defined and $\widehat{g}$ must inherit some of its properties. Recalling that there are four possible disformal operators satisfying \ref{deformednorm} and that we have claimed $\overrightarrow{D}_{\{+,+\}}$ is unique in a certain sense, we then state and prove the following
\begin{theorem}
$\overrightarrow{D}_{\{+,+\}}$ is the only disformal operator which satisfies \ref{deformednorm} and the disformal group structure. The same holds for the disformal co-metric and the operator $\widetilde{D}_{\{+,+\}}$.
\end{theorem}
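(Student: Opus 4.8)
The plan is to split the argument into two stages: first show that Eq.\ \ref{deformednorm} pins the operator down only up to the fourfold sign ambiguity of \ref{poss_oper}, and then show that the group law \ref{disformalgroup} discards all but the $\{+,+\}$ branch. For the first stage I would take any operator of the form $\overrightarrow{D}(X)=f_{1}X_{\parallel}+f_{2}X_{\perp}$ and expand $g(\overrightarrow{D}(Y),\overrightarrow{D}(Z))$; since $X_{\parallel}$ lies along $V$ and $X_{\perp}$ in its $g$-orthogonal complement, the cross terms $g(Y_{\parallel},Z_{\perp})$ vanish and one is left with $f_{1}^{2}\,g(Y_{\parallel},Z_{\parallel})+f_{2}^{2}\,g(Y_{\perp},Z_{\perp})$. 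Matching this against the components in \ref{cov} forces $f_{1}^{2}=\alpha+\beta$ and $f_{2}^{2}=\alpha$, so $f_{1}=\pm\sqrt{\alpha+\beta}$ and $f_{2}=\pm\sqrt{\alpha}$---precisely the four candidates $\overrightarrow{D}_{\{\pm,\pm\}}$. Thus \ref{deformednorm} by itself is sign-blind and cannot select a unique operator.

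The decisive stage uses the structure of Sec.\ \ref{group}. The key fact I would isolate is that the disformal group is \emph{divisible}: writing the parameters as $a=\alpha$ and $b=\alpha+\beta$, the composition \ref{disformalgroup} reads $(a_{1},b_{1})\bullet(a_{2},b_{2})=(a_{1}a_{2},\,b_{1}b_{2})$ on the positive quadrant, so every element admits a square root \emph{within} the group. Concretely, for any admissible $(\alpha,\beta)$ the parameters $\alpha_{0}=\sqrt{\alpha}$ and $\beta_{0}=\sqrt{\alpha+\beta}-\sqrt{\alpha}$ again obey $\alpha_{0}>0$ and $\alpha_{0}+\beta_{0}>0$, and satisfy $\overrightarrow{D}_{(\alpha_{0},\beta_{0})}\bullet\overrightarrow{D}_{(\alpha_{0},\beta_{0})}=\overrightarrow{D}_{(\alpha,\beta)}$.

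I would then compute the self-composition of an arbitrarily signed operator. Because the parallel and perpendicular subspaces are invariant, a second application of $\overrightarrow{D}_{\{s_{1},s_{2}\}}(X)=s_{1}\sqrt{\alpha_{0}+\beta_{0}}\,X_{\parallel}+s_{2}\sqrt{\alpha_{0}}\,X_{\perp}$ reproduces $s_{1}^{2}(\alpha_{0}+\beta_{0})X_{\parallel}+s_{2}^{2}\alpha_{0}X_{\perp}=(\alpha_{0}+\beta_{0})X_{\parallel}+\alpha_{0}X_{\perp}$, with $s_{1}^{2}=s_{2}^{2}=1$. In other words every square lands in the $\{+,+\}$ branch, independently of the signs of its root. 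Combining this with divisibility, each $\overrightarrow{D}_{(\alpha,\beta)}$ is a square and hence must equal $\overrightarrow{D}_{\{+,+\}}$; as $(\alpha,\beta)$ was arbitrary this is the claimed uniqueness. Equivalently, the sign assignment is a group homomorphism from the divisible group $(\mathbb{R}^{+})^{2}$ into $\mathbb{Z}_{2}\times\mathbb{Z}_{2}$, and the only such homomorphism is trivial.

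Finally, since the co-metric operator $\widetilde{D}$ has the identical algebraic form (with $\sqrt{\cdot}$ replaced by $1/\sqrt{\cdot}$) and obeys the same Abelian, divisible composition law, the verbatim argument singles out $\widetilde{D}_{\{+,+\}}$. I expect the main obstacle to be conceptual rather than computational: one must recognize that \ref{deformednorm} is insensitive to the signs and that the real work is done by closure and divisibility of the group. The only point needing care is checking that the square-root parameters remain admissible, which holds because $\sqrt{\alpha}>0$ and $\sqrt{\alpha+\beta}>0$.
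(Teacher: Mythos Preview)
Your argument is correct. The first stage, deriving the fourfold sign ambiguity, is already recorded in the paper as Eq.~\ref{poss_oper} prior to the theorem, so strictly speaking the proof need only handle the second stage. There your approach and the paper's rest on the same computation---self-composition squares the signs and always lands in the $\{+,+\}$ branch---but they package it differently. The paper observes that the four sign choices form a group isomorphic to the Klein four-group $\mathbb{Z}_2\oplus\mathbb{Z}_2$ under composition, and since only the identity of that group is its own square, $\{+,+\}$ is the unique sign branch closed under self-composition. You instead invoke divisibility of the parameter group: every admissible $(\alpha,\beta)$ has an admissible square root $(\alpha_0,\beta_0)$, and since squaring any signed operator yields a $\{+,+\}$ operator, every element of the group must already carry the $\{+,+\}$ sign. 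Your homomorphism remark (a divisible group admits no nontrivial map into $\mathbb{Z}_2\times\mathbb{Z}_2$) is an abstract restatement of the same fact. The paper's route is marginally more economical, needing only closure rather than the existence of square roots; yours, on the other hand, dovetails nicely with Sec.~\ref{squareroot}, where the disformal operator is exhibited as the square root of the disformal metric.
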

\begin{proof}
Consider the set G of all admissible disformal operators given by Eq.\ \ref{poss_oper} with a composition law \ref{disformalaction}. By admissible we mean that whichever disformal parameters $\alpha$'s and $\beta's$ might satisfy $\alpha>0$ and $\alpha+ \beta>0$. For instance,
	\begin{eqnarray*}
	\left(\overrightarrow{D}^1_{\{-,+\}}\bullet \overrightarrow{D}^2_{\{+,-\}}\right)(X) = 	\left(\overrightarrow{D}^2_{\{+,-\}}\bullet\overrightarrow{D}^1_{\{-,+\}}\right)(X) =\nonumber\\
	 -\sqrt{(\alpha_1+\beta_1)(\alpha_2+\beta_2)}X_{\parallel}-\sqrt{\alpha_1\alpha_2}X_{\perp}=\overrightarrow{D}_{\{-,-\}}(X),
\end{eqnarray*}
for any $X\in\Gamma(TM)$, where the disformal parameters are $\alpha^{\prime}=\alpha_1\alpha_2$ and $\beta^{\prime} = \alpha_1\beta_2 +\beta_1\alpha_2 + \beta_1\beta_2$.
It is easy to ascertain that $(\mbox{G},\bullet)$ is an Abelian group isomorphic to the Klein four-group (also isomorphic to $\mathbb{Z}_2\oplus\mathbb{Z}_2$). As a result of this isomorphism, each element but the identity has order two and the only one whose square is itself is the identity; in our case these requirements are only fulfilled by$\overrightarrow{D}_{\{+,+\}}$. Therefore, the only closed composition of disformal operators holds when they are both of the form $\overrightarrow{D}_{\{+,+\}}$. The proof that it satisfies a group structure was given in the previous section. This is entirely analogous to the $\widetilde{D}$ operator.
\end{proof}

A geometrical argument to rule out the other candidates for the disformal operator (the ones with at least one negative sign) is: because of the negative sign, they must include a reflection in the direction perpendicular to $V$ and/or a change in direction along with $V$. Therefore, whenever they are applied an even number of times, a positive sign must appear, implying that the operation is not closed. Thus, this ensures that there is a unique and well-defined disformal operator characterizing the disformal metric.

\subsection{The disformal operator as the square root of the disformal metric}\label{squareroot}
We have shown that given a local coordinate system $\{x^{\mu}\}$ the coordinate expression for the disformal operator $\overrightarrow{D}$ takes the form \ref{coord1}. Lowering the $\mu$ index with the background metric $g$, we thus get
\begin{equation*}
\mathfrak{D}_{\mu\nu}\doteq g_{\mu\sigma} \mathfrak{D}^{\sigma}_{\ \nu}=\alpha^{\prime}g_{\mu\nu} +\frac{\beta^{\prime}}{V^2}V_{\mu}V_{\nu},
\end{equation*}
where $\alpha^{\prime}=\sqrt{\alpha} >0$ and $\alpha^{\prime} + \beta^{\prime} = \sqrt{\alpha+\beta}>0$ and, therefore, $\mathfrak{D}_{\mu\nu}$ can be seen as a disformal metric tensor on $M$.

Using the composition law between disformal metrics (given in Ref.\ \cite{bitt15})
\begin{eqnarray*}
\Big(\lfloor \alpha_1,\beta_1\rfloor \odot	\lfloor \alpha_2,\beta_2\rfloor \Big)g& =& (\alpha_1\alpha_2) g_{\mu\nu} \\
&+& \frac{\beta_1\alpha_2+\beta_1\beta_2+\alpha_1\beta_2}{V^2}V_{\mu}V_{\nu},
\end{eqnarray*}
we obtain that the square of $\mathfrak{D}$ is precisely $\widehat{g}$:
\begin{equation}
\Big(\lfloor \alpha^{\prime},\beta^{\prime}\rfloor \odot	\lfloor \alpha^{\prime},\beta^{\prime}\rfloor \Big)g= \widehat{g}_{\mu\nu}.
\end{equation}
Another way to see this is by looking at the eigenvalues of $\widehat{g}^{\mu}_{\ \nu} = \widehat{g}_{\sigma\nu}g^{\sigma\mu}$ and $\mathfrak{D}^{\mu}_{\ \nu}$: from the definition \ref{def-oper}, the eigenvalue problem associated with the operator $\overrightarrow{D}$ is trivially solvable. We see that $V$ is an eigenvector related to the eigenvalue $\lambda_{V}=\sqrt{\alpha + \beta}$, while the other eigenvalues are degenerated and equal to $\sqrt{\alpha}$, with linearly independent eigenvectors lying on the orthogonal complement of $V$. The same analysis can be carried out to $\widetilde{D}$. It means that the eigenvalues of $\widehat{g}^{\mu}_{\ \nu}$ (see Ref.\ \cite{bitt15}) are exactly the eigenvalues of $\mathfrak{D}^{\mu}_{\ \nu}$ squared, and hence $\mathfrak{D}^2 = \widehat{g}$ as operators.

\section{Remarks on the causal structure}\label{seccausal}
We now study the relationship between light cones of the background geometry and the disformal one, and how Eq.\ \ref{deformednorm} gives rise to two interesting, and equivalent, interpretations of causality in the context of disformally related metrics and operators. For the first task, consider a disformal metric as in Eq.\ \ref{disformalmet1}. Let us fix a point $p\in M$ and at that point consider an orthonormal basis $\{e_{A}\}$ with respect to the background metric $g$, where $e_{0} = V/\sqrt{g(V,V)}$. Thus, for any $X\in T_p M$, we can write $X = X^{A} e_{A}$ and obtain
\begin{eqnarray*}
	\widehat{g}(X,X) = (\alpha +\beta) \left(X^{0} \right)^2 - \alpha\,\delta_{IJ} X^IX^J,
\end{eqnarray*}
where $\delta_{IJ}$ is the Kronecker delta, for $I,J=1,2,3$.
\par
Since we want to compare light cones of both metrics, let us assume that $X$ is a null-like vector with respect to $\widehat{g}$, that is $\left(X^0\right)^2 - \,\delta_{IJ} X^IX^J = -\beta (X^0)^2/\alpha$. Therefore, at $p \in M$, we have the following conditions:
\begin{enumerate}
\item If $\beta =0$, then $X$ is also a null-like vector with respect to $g$ and the disformal light cone is the same as the background one;
\item If $\beta < 0$, then $X$ is a time-like vector with respect to $g$ and the disformal light cone lies inside the background one;
\item If $\beta>0$, then $X$ is a space-like vector with respect to $g$ and the background light cone lies inside the disformal one.
\end{enumerate}

For the second task, note that the existence of the tensor field $\overrightarrow{D}$, such that a disformal metric in Eq.\ \ref{disformalmet1} can be written as Eq.\ \ref{deformednorm} for any fields $X,Y$, allows us to interpret the left hand side of \ref{deformednorm} as a new metric tensor for $M$. Thus, the light cones of the metric $\widehat{g}$ are, in general, different from those of $g$ and hence have a different causal structure on $M$.

For practical reasons, we provide now a simple and merely illustrative example, without any physical meaning a priori. Let us fix the background metric to be the Minkowski one [$\eta=\mbox{diag} (1, -1, -1, -1)$] and, at a fixed point $p$, we have $V(p) = (2,1,0,0)$, $X(p)=5\,(-1,1,1,0)$, $\alpha(p) = 2 $ and $\beta(p)=3$. It is easily verified that $X$ is space-like with respect to $g$ and time-like with respect to $\widehat{g}$ at $p$, which means that a vector field could be space-like in the background metric and time-like in the disformal one. This situation is depicted on top of Fig.\ \ref{fig1}. Conversely, the right hand side of Eq.\ \ref{deformednorm} indicates that we can consider just the background metric $g$, but applied to deformed vectors. Therefore, although light cones are preserved, causal relations change because the vectors have done so. At the bottom of Fig.\ \ref{fig1}, the vector is originally space-like and its disformal counterpart is time-like. Surely, the important feature is that the causal relation must agree whether you apply the disformal metric to vectors or the background metric to the deformed vectors.

\begin{figure}
\begin{center}
\includegraphics[width=70mm]{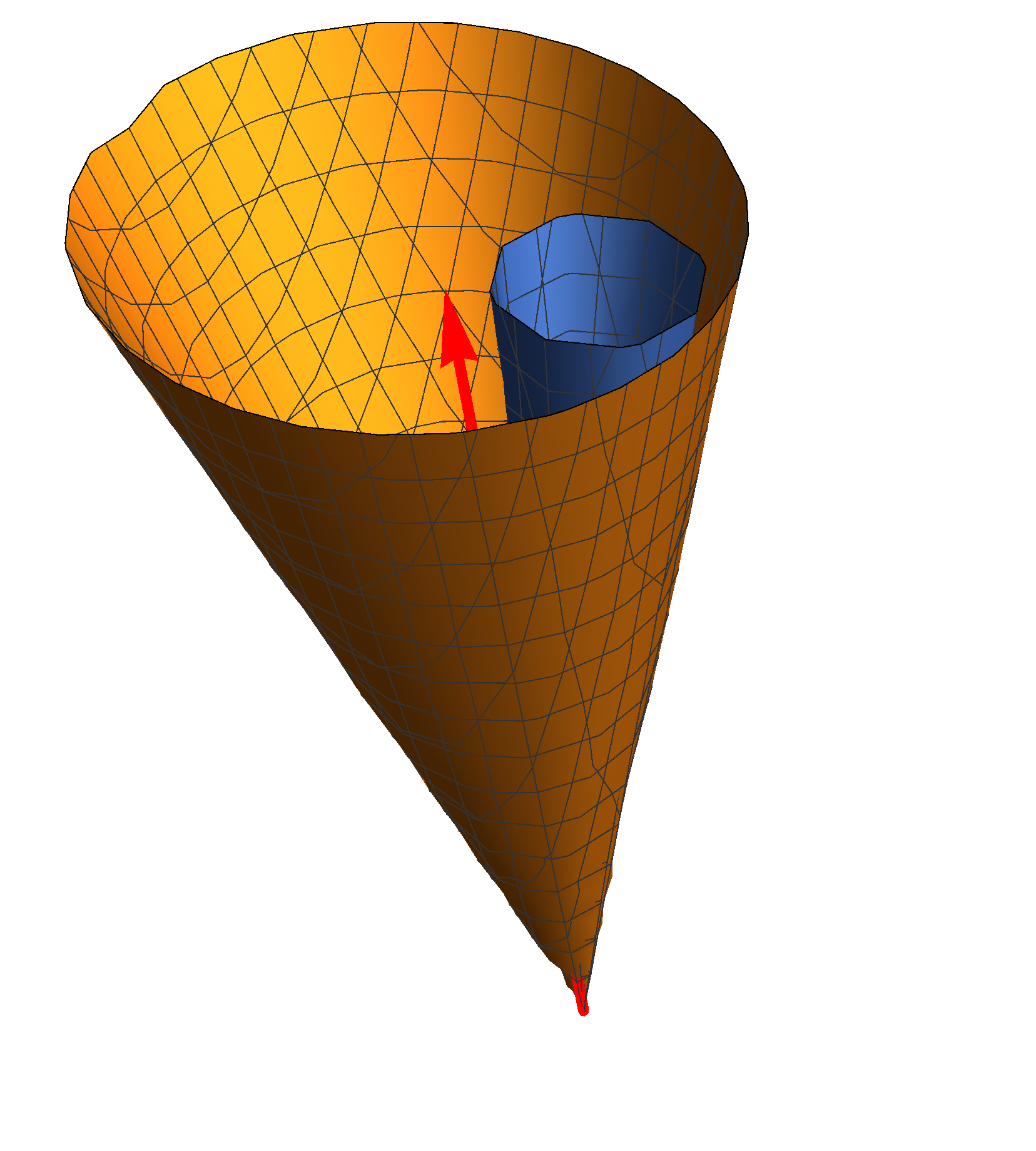}
\includegraphics[width=70mm]{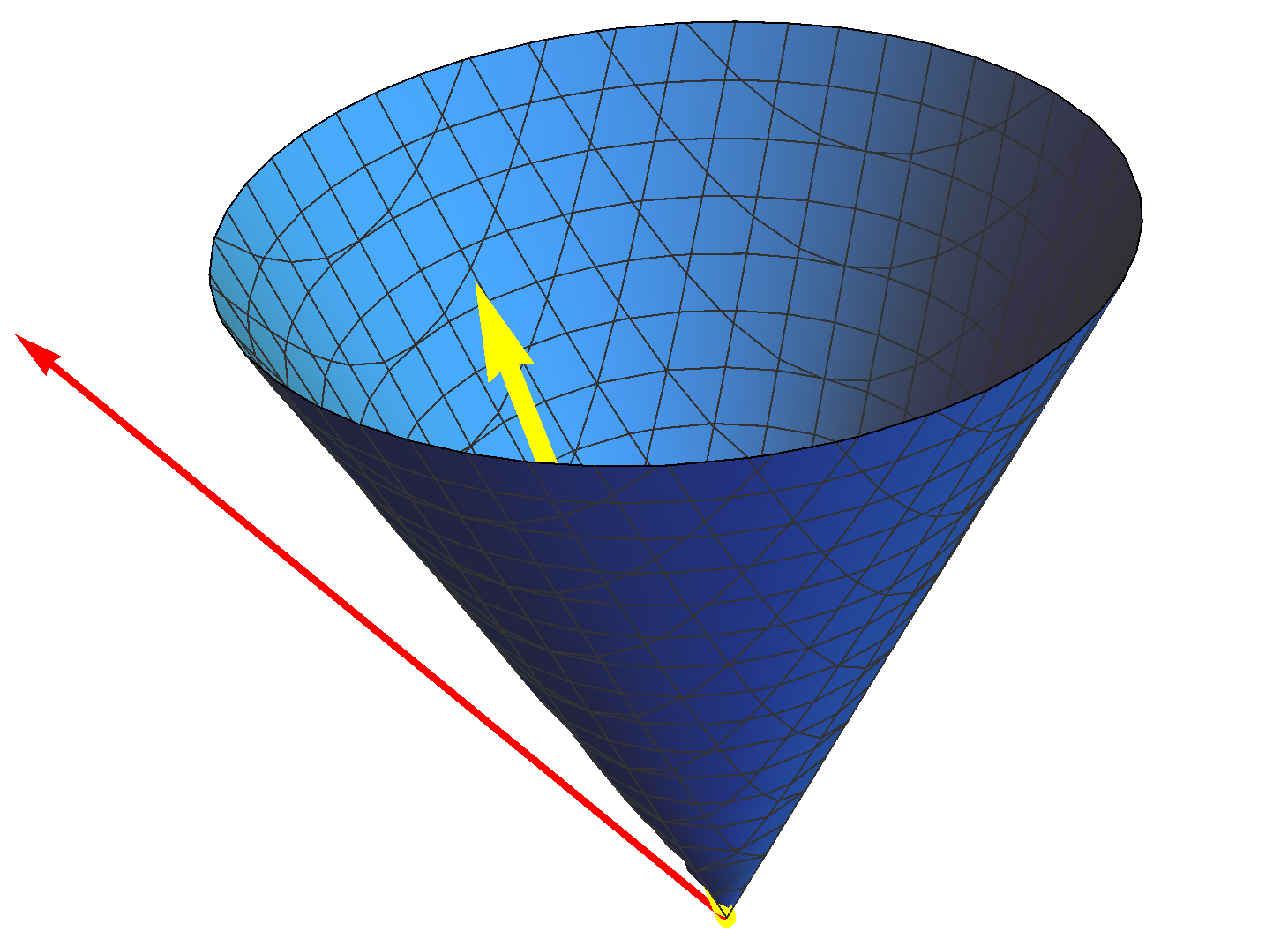}
\end{center}
\caption{(color online). On top, the vector is kept and the light cone is changed. In this case, $X$ is space-like in $g$ and time-like in $\widehat{g}$. At the bottom, the light cone is kept and the vector is deformed. The original vector (red) is space-like and its deformed counterpart (yellow) is time-like.}
\label{fig1}
\end{figure}
We thus conclude the mathematical aspects of disformal transformations we intended to develop here. Now, we shall see how to apply this simple and elegant formalism to the realm of quantum-gravity phenomenology.

\section{Application to Rainbow Gravity} \label{applications}
\par
As we have stated previously, disformal transformations can naturally be seen as pure mathematics, attractive on its own, and with various physical applications. In particular, we have seen here different representations for performing disformal transformations in the context of differential geometry, essentially by means of metric tensors.

Nevertheless, we are interested in the applications of the aforesaid formalism in what concerns phenomenological approaches to quantum-gravity, particularly, Rainbow Gravity \cite{magueijo04}. We believe that disformal transformations as presented above provide a unified language for deforming a background space-time metric in this scenario and can shed light on some fundamental problems there, like covariance and causality. With this in mind, we dedicate the forthcoming sections to discussing these issues.

\subsection{Rainbow Gravity and disformal metrics}\label{GR}
The formulation of Rainbow Gravity is a phenomenological modification of General Relativity that incorporates some properties of the Doubly Special Relativity (DSR) program \cite{magueijo04}. DSR models deform the kinematics of Special Relativity, modifying also the energy-momentum conservation laws and the Lorentz symmetry group, by admitting an invariant energy scale associated with quantum-gravitational effects: the Planck scale. The motivation behind this stems from the path used to go from Galilean Relativity to the Special Relativity, modifying the kinematic equations of the former in order to appear an invariant velocity scale. Following the same lines, it is possible to deform the latter by taking into account an invariant energy scale, which is generally believed to correspond to quantum-space-time effects, and thus derive the DSR without violation of the relativity principle. For pioneering works, see Refs.\ \cite{amelino01, smolin01, smolin02} and, for a broad review, see ref.\ \cite{amelino02}.

Following the prescription presented in \cite{smolin01} to perform such modifications, one can deform the momentum space of a particle with momentum $\pi=(p_0,p_i)$ using a function $U$ that depends on the ratio between the particle energy $p_0$ and the Planck energy $\kappa$, as follows\footnote{We consider geometric units: $c=\hbar=1$.}
\begin{equation}
\label{u-map}
U (p_0,p_i) = (f_1(p_0/\kappa)p_0, f_2(p_0/\kappa)p_i),
\end{equation}
leading to the modified dispersion relation (MDR):
\begin{equation}
\label{p-norm-mdr}
\|\pi\|^2=\eta^{AB}\left[U(\pi)\right]_A\left[U(\pi)\right]_B
=(f_1)^2p_0^2-(f_2)^2|\vec{p}|^2.
\end{equation}

In order to guarantee the invariance of this MDR, Lorentz symmetry transformations also need to be deformed. Although this deformation was initially intended to take place in the Minkowski space, the idea of Rainbow Gravity is that such MDR can be described by energy-dependent tetrad fields, which in turn produce an energy-dependent (rainbow) metric of the form
\begin{equation}
\label{line}
ds^2=\frac{(dx^0)^2}{f_1^2}-\frac{1}{f_2^2}\,\delta_{ij}\,dx^idx^j,
\end{equation}
where $\delta_{ij}$ is the Kronecker delta, for $i,j=1,2,3$. This means that space-time is deformed in the inverse way of the momentum space (for details, see Ref.\ \cite{magueijo04}). The $U-$transformation defined in \ref{u-map} resembles the ones we have considered throughout this paper, for a suitable choice of the disformal operator $\widetilde{D}$.

In fact, considering a time-like 1-form field $\widetilde{V}$ as defining a preferred direction in space-time, this leads to the definition of energy as the projection of the four-momentum $\pi$ onto the direction of the corresponding normalized 1-form vector $\nu \doteq \widetilde{V} / \sqrt{h(\widetilde{V}, \widetilde{V})}$, that is $p_0\doteq h(\pi,\nu)$. Therefore, the co-vector responsible for the disformal transformation introduces a natural time-like direction to the reference frame. Thus, using the orthonormal basis $\{\nu,\theta^I\}$, an immediate conclusion one can get from this analysis is that the disformal momentum assumes the form
\begin{equation}
\widehat{\pi} = \widetilde{D}(\pi) = \frac{1}{\sqrt{\alpha+\beta}}\,p_0\,\nu + \frac{1}{\sqrt{\alpha}}\,p_I\,\theta^I,
\end{equation}
where $\alpha$ and $\beta$ are now scalar-functions depending on $\pi$ and on $\nu$, and are linked to the rainbow functions $f_1$ and $f_2$ through Eq.\ \ref{u-map}:
\begin{equation}
\label{functions}
\alpha=(f_2)^{-2} \quad \mbox{and} \quad \beta=(f_1)^{-2}-(f_2)^{-2}.
\end{equation}
Furthermore, from the definition of the particle mass as the norm of $\pi$, a MDR naturally appears from this map in complete analogy with Eq.\ \ref{p-norm-mdr}:
\begin{equation}
\widehat{m}_{\pi}^2\doteq\widehat{h}(\pi,\pi)=\left(\frac{1}{\alpha+\beta}\right)p_0^2-\frac{1}{\alpha}\,\delta^{IJ}\,p_I\,p_J.
\end{equation}
Finally, the equivalence between the formalism we developed here and Rainbow Gravity is fulfilled by deriving the induced space-time metric [see Eq.\ \ref{line}]
\begin{equation}
d\widehat{s}^2=\widehat{g}_{\mu\nu}dx^{\mu}dx^{\nu}=(\alpha+\beta)(dx^0)^2-\alpha\,\delta_{ij}\,dx^idx^j.
\end{equation}

Thus, we could identify unequivocally the energy that appears in \ref{u-map} as $p_0=h(\pi,\nu)$ well as the respective time-like direction that defines the deformation. We stress that although this formalism seems to impose a preferred inertial frame in space-time, which would break the local canonical Lorentz symmetry, in the light of a DSR formulation this is not at all the case, since a {\it deformed version of the Lorentz transformation is the one that preserves the local relativity principle}: this is the main conceptual achievement of DSR. Were not the existence of deformations taken into account, then this would lead to a formalism with Lorentz invariance violation and, consequently, a preferred reference frame. It should also be noticed that the formalism we developed here is intrinsically geometric and then it is fully covariant under coordinate transformations.

%\par
%The equivalence is fulfilled when it is noticed that such description is equivalent to the action of $\widetilde{D}$ on the momentum space. Instead of working with a disformal metric, we could have used the deformed momentum:
%\begin{equation}
%\widehat{\pi}=\widetilde{D}(\pi)=\frac{1}{\sqrt{\alpha+\beta}}p_0\nu+\frac{1}{\sqrt{\alpha}}p_I\theta^I.
%\end{equation}
%As we mentioned in Sec.\ \ref{seccometric}, one could interpret the $\widetilde{D}$-map as inducing a disformal metric, analogously to the Magueijo-Smolin $U$-map displayed above, inducing a rainbow metric.

\subsection{Some examples}\label{examples}
We now make use of the literature in Rainbow Gravity (cf.\ Refs.\ \cite{smolin01, grasi}) to illustrate with some examples how this relation works in practice.

\begin{example}[The case with $f_1=f_2$:] it will not alter the light cone. If $f_1$ is equal to $f_2$, then $\beta=0$ and the disformal transformation reduces to a conformal one. A well-known example on this was first proposed in Ref.\ \cite{smolin01}:
\begin{equation}
f_1(E/\kappa)=f_2(E/\kappa)=\frac{1}{1-E/\kappa}
\end{equation}
implying that
\begin{equation}
\alpha =\left[1-\frac{h(\pi,\nu)}{\kappa}\right]^{2}.
\end{equation}
This choice of deformation yields a maximum energy for a one-particle system, given by $\kappa$ and the causal structure is maintained invariant of course.
\end{example}
\begin{example}[The case with $f_2=1$:] this second example (cf.\ details in Ref.\ \cite{grasi}) has an invariant spatial contribution for the dispersion relation. Let
\begin{equation}
f_1(E/\kappa)=\frac{e^{E/\kappa}-1}{E/\kappa} \quad \mbox{and} \quad f_2(E/\kappa)\equiv 1.
\end{equation}
In terms of the disformal functions, we get
\begin{equation}
\alpha =1,\quad \mbox{and} \quad
\beta =\left(\frac{h(\pi,\nu)/\kappa}{e^{h(\pi,\nu)/\kappa}-1}\right)^2-1.
\end{equation}
\par
For this dispersion relation, one can calculate the speed of light as $(dE/dp)|_{m=0}\approx 1-E/\kappa$. Therefore, ultra-violet photons propagate with speed smaller than infra-red ones, within this model. Note that this is completely consistent with the causal structure analyzed in Sec.\ \ref{seccausal}. Since $-1<\beta< 0$, we have that $\alpha+\beta<\alpha$ and, therefore, the disformal light cone lies inside the undeformed one.
\end{example}

\begin{example}[The case with $f_1=1$:]	this third example is the opposite of the previous one (see Ref.\ \cite{dimensional} and references therein), in the sense that the time contribution is now kept invariant. Consider
\begin{equation}
f_1=1 \quad, \mbox{and} \quad  f_2=\left[1+\left(\frac{|\vec{p}|}{\kappa}\right)^4\right]^{\frac{1}{2}}.
\end{equation}
In terms of the metric coefficients, we obtain
\begin{subequations}
\begin{eqnarray}
\alpha &=&\frac{\kappa^4}{k^4+[h^2(\pi,\nu)-h(\pi,\pi)]^2}\\[2ex]
\beta &=&\frac{[h^2(\pi,\nu)-h(\pi,\pi)]^2}{\kappa^4+[h^2(\pi,\nu)-h(\pi,\pi)]^2}.
\end{eqnarray}
\end{subequations}
\par
For this dispersion relation the deformed speed of light is $(dE/dp)|_{m=0}\approx 1+5(p/\kappa)^4/2$, which means that high-energy photons propagate with speed larger than low-energy ones. Again, this is compatible with the causal structure, once $0< \beta <1$ and, consequently, $\alpha+\beta>\alpha$. Therefore, the disformal light cone lies outside the undeformed one.
\end{example}

\section{Concluding remarks}\label{conclusion}
We have shown that disformal metrics can be written in terms of a linear isomorphism acting on the tangent space and that they actually inherit the properties of what we called the disformal operator. From the reasons presented in the text, this operator can be seen as a more fundamental quantity than the disformal metric, providing a mathematical framework for disformal transformations. We then analyze this new facet of disformal transformations in the light of the causal structure, where it gives rise to an alternative interpretation of the modified causal cones in purely algebraic terms.

Finally, as a direct application of the formalism developed previously, we verified that the most relevant models in Rainbow Gravity are perfectly described in terms of disformal transformations. In this vein, it was possible to obtain the missing covariant approach for such phenomenological theory, with a well behaved causal structure and a clear mathematical interpretation of the physical quantities involved.

\section*{Acknowledgments}
We would like to express our gratitude to Jonas P. Pereira and Grasiele B. dos Santos for reading a previous version of this manuscript and suggesting several valuable improvements. The authors are supported by the CAPES-ICRANet program (BEX  15114/13-9, 14632/13-6, 13956/13-2).

\end{document}